\newcommand{\eps}{\varepsilon}
\newcommand{\calA}{\mathcal{A}}
\newcommand{\calB}{\mathcal{B}}
\newcommand{\calI}{\mathcal{I}}
\newcommand{\POF}{\textnormal{\textsf{POF}}}
\newcommand{\EW}{\textnormal{\textsf{EW}}}
\newcommand{\MEW}{\textnormal{\textsf{MEW}}}
\newcommand{\EFi}{\textnormal{\textsf{EF1}}}
\newcommand{\RR}{\textnormal{\textsf{RR}}}
\newcommand{\Ba}{\textnormal{\textsf{Ba}}}
\newcommand{\MUW}{\textnormal{\textsf{MUW}}}
\newcommand{\MNW}{\textnormal{\textsf{MNW}}}
\newcommand{\EG}{\textnormal{EG}}
\spnewtheorem{fact}[theorem]{Fact}{\bfseries}{\itshape}
\spnewtheorem{defn}[theorem]{Definition}{\bfseries}{\itshape}
\spnewtheorem{thm}[theorem]{Theorem}{\bfseries}{\itshape}
\spnewtheorem{corl}[theorem]{Corollary}{\bfseries}{\itshape}
\spnewtheorem{lem}[theorem]{Lemma}{\bfseries}{\itshape}
\begin{document}
\title{Egalitarian Price of Fairness for Indivisible~Goods\thanks{A shorter version of this paper is published in the Proceedings of 20th Pacific Rim International Conference on Artificial Intelligence (PRICAI) at \url{https://doi.org/10.1007/978-981-99-7019-3_3}.}}
%
%
\author{Karen Frilya Celine\inst{1}\orcidID{0000-0002-7078-5582} \and
Muhammad Ayaz Dzulfikar\inst{1}\orcidID{0009-0002-7962-0677} \and
Ivan Adrian Koswara\inst{1, 2}\orcidID{0000-0002-9311-6840}}
\authorrunning{K. F. Celine, M. A. Dzulfikar and I. A. Koswara}
%
\institute{School of Computing, National University of Singapore, Singapore\\
\email{\{karen.celine,ayaz.dzulfikar\}@u.nus.edu, ivanak@comp.nus.edu.sg}
\and {Corresponding author}}
\maketitle              
\begin{abstract}
In the context of fair division, the concept of price of fairness has been introduced to quantify the loss of welfare when we have to satisfy some fairness condition. In other words, it is the price we have to pay to guarantee fairness. Various settings of fair division have been considered previously; we extend to the setting of indivisible goods by using egalitarian welfare as the welfare measure, instead of the commonly used utilitarian welfare. We provide lower and upper bounds for various fairness and efficiency conditions such as envy-freeness up to one good (EF1) and maximum Nash welfare (MNW).
\keywords{Fair division \and Price of fairness \and Egalitarian welfare.}
\end{abstract}

\section{Introduction}

Fair division is the problem of allocating scarce resources to agents with possibly differing interests. It has 
many real world applications, such as the distribution of inheritance, divorce settlements and airport traffic management. Economists have studied fair division as far back as the 1940s~\cite{DubinsSpanier1961,Steinhaus1948}. Recently, the problem of fair division has also received significant interest in artificial intelligence~\cite{AmanatidisEtAl2018,ijcai2019p0012,OhEtAl2019}.

In a fair division problem, there are several possible goals to strive for. One goal is \textit{fairness}, where each individual agent should feel they get a fair allocation; another is \textit{social welfare}, where the goal is to optimize the welfare of all agents as a whole.
These goals are not always aligned. For example, to maximize the sum of utilities of the agents (i.e. utilitarian welfare), the optimal allocation is to assign each item to the agent that values it the most. Clearly this allocation can be far from fair, as an agent might be deprived of every item. However, making the allocation fairer comes at the cost of decreasing the total welfare. In other words, there is a price to pay if we want a division to be fair.

The notion of \textit{price of fairness} was introduced independently by Bertsimas et al.~\cite{pof} and Caragiannis et al.~\cite{Caragiannis2012} to capture this concept. Initially, the setting was for utilitarian welfare on divisible goods. Since then, there have been other works discussing the setting of utilitarian welfare with indivisible goods~\cite{Barman2020,ijcai2019p0012}, as well as the setting of egalitarian welfare with divisible goods~\cite{AumannDombb2015,Caragiannis2012}. Since the same cannot be said for egalitarian welfare with indivisible goods, our paper completes the picture by investigating this setting.

One problem with investigating fairness conditions is that they might not have a satisfying allocation for some instances, especially when the goods are indivisible. We follow the method in Bei et al.~\cite{ijcai2019p0012} of handling this problem by considering only fairness conditions which can always be satisfied in all instances for any number of agents. As such, we do not investigate properties such as envy-freeness and proportionality, which are not guaranteed to be satisfiable. Special cases such as envy-freeness up to any good (EFX) which has been shown to be satisfiable for $n \le 3$ agents can be considered for future works.

We study the price of fairness of three fairness properties: envy-freeness up to one good (EF1), balancedness, and round-robin. Not only are these properties always satisfiable, but an allocation which has all three properties can be easily found by the round-robin algorithm. Furthermore, these fairness notions are widely studied in the literature. In particular, tight bounds for the utilitarian price of fairness of these properties have been found~\cite{ijcai2019p0012}, which allows for comparison between the utilitarian and egalitarian prices of fairness.

Moreover, we also study the price of fairness of two welfare maximizers: maximum utilitarian welfare (MUW) and maximum Nash welfare (MNW). While these are efficiency notions instead of fairness notions, they are crucial to the study of resource allocation. Studying their prices of fairness helps us compare between the different types of welfare maximizers, and might shed light on if and when one type of welfare function would best quantify social welfare.

\subsection{Our Results}

We investigate the upper and lower bounds of the price of fairness for five fairness and efficiency properties described above. Letting $n$ be the number of agents in the instance, we show that EF1, balancedness, and round-robin have price of fairness $\Theta(n)$. Meanwhile, MUW and MNW have infinite price of fairness, except for the case of MNW with $n = 2$ where the price of fairness is finite. Our results are summarized in Table~\ref{tab:summary}. We have also included the utilitarian prices of fairness found by Bei et al.~\cite{ijcai2019p0012} for comparison. We restrict our attention to the general instances for any fixed $n$; future work can be done on specializing to, say, instances with identical ordering, or some other constraint, in case it can bring down the price of fairness for some of the properties.

\begin{table*}[ht]
\caption{Summary of results}
\label{tab:summary}
\centering
\begin{tabular}{|l|c|c|c|}
    \hline
    \multicolumn{2}{|c|}{\multirow{2}{*}{Property}} & \multicolumn{2}{c|}{Price of fairness} \\
    \cline{3-4}
    \multicolumn{2}{|c|}{} & {Egalitarian} & {Utilitarian~\cite{Barman2020,ijcai2019p0012}} \\
    \hline
    \multicolumn{2}{|l|}{Envy-free up to one good (EF1) } & $\Theta(n)$ & $\Theta(\sqrt{n})$\\
    \hline
    \multicolumn{2}{|l|}{Balanced} & $n$ & $\Theta(\sqrt{n})$\\
    \hline
    \multicolumn{2}{|l|}{Round-robin algorithm (RR)} & $\Theta(n)$ & $n$ \\
    \hline
    \multirow{2}{*}{Maximum Nash welfare (MNW)} & ($n = 2$) & $\approx 2$ & $\approx 1.2$ \\
    \cline{2-4}
    & ($n \ge 3$) & $\infty$ & $\Theta(n)$ \\
    \hline
    \multicolumn{2}{|l|}{Maximum utilitarian welfare (MUW)} & $\infty$ & 1 \\
    \hline
    \multicolumn{2}{|l|}{Maximum egalitarian welfare (MEW)} & 1 & $\Theta(n)$ \\
    \hline
\end{tabular}
\end{table*}

In a way, our results are surprising compared to the utilitarian results. Utilitarian welfare is purely an efficiency notion, while egalitarian welfare captures some sort of ``fairness'', since maximizing the utility of the poorest agent means that every agent's utility is taken into consideration and no agent's poverty can be ignored. However, the egalitarian price of fairness for the properties are actually worse (higher) than the utilitarian price of fairness. Despite appearing ``fairer'', egalitarian welfare turns out to be less fair when we impose other fairness conditions.

\subsection{Related Work}

As mentioned above, Bertsimas et al.~\cite{pof} and Caragiannis et al.~\cite{Caragiannis2012} independently introduced the concept of \textit{price of fairness}. Bertsimas et al. studied it in the context of divisible goods, while Caragiannis et al. studied both goods and chores whether they are divisible or indivisible.  Since then, the price of fairness has been studied in other settings. In the context of contiguous allocations,
the price of fairness has been studied for divisible goods~\cite{AumannDombb2015}, indivisible goods~\cite{Suksompong2019}, divisible chores~\cite{HeydrichVanStee2015} as well as indivisible chores~\cite{HohneVanStee2021}. Li et al.~\cite{LiEtAl2022} studied the price of fairness of almost weighted proportional allocations for indivisible chores.
Additionally, Bil\`o et al.~\cite{Bilo2016} studied it in the context of machine scheduling, while Michorzewski et al.~\cite{Michorzewski2020} studied it in the context of budget division.

Typically, the price of fairness refers to the utilitarian price of fairness which measures the loss of utilitarian welfare due to fairness constraints. However, the price of fairness can also be defined with respect to other social welfare functions. For example, in the context of egalitarian welfare, Aumann and Dombb~\cite{AumannDombb2015} and Suksompong~\cite{Suksompong2019} studied the price of fairness for contiguous allocations of divisible and indivisible goods respectively. More generally, Arunachaleswaran et al.~\cite{arunachaleswaran2021fair} used the generalized H\"older mean with exponent $\rho$ as their welfare function. In particular, when $\rho = 1, 0, -\infty$, the generalized mean corresponds to utilitarian, Nash, and egalitarian welfare respectively. This is done in the context of approximately envy-free allocations of divisible goods.

Most studies express the price of fairness as a function of the number of agents $n$. However, there are cases where the price of fairness (for indivisible goods) depends also on the number of goods $m$. Kurz~\cite{Kurz2014} studied the price of envy-freeness in terms of both the number of agents and the number of goods, and showed that when the number of goods is not much larger than the number of agents, the price of fairness can be much lower. Bei et al.~\cite{ijcai2019p0012} proved a similar result for round-robin allocations.

More generally, fair division has been an active area of research, with many studies investigating different ways to define fairness, including envy-freeness up to one good (EF1), envy-freeness up to any good (EFX), maximin share (MMS), and pairwise maximin share (PMMS) \cite{AmanatidisEtAl2018,BiswasBarman2018,Caragiannis2019,GhodsiEtAl2018,KurokawaEtAl2018b,KyropoulouEtAl2020,Markakis2017,OhEtAl2019,PlautRoughgarden2020}. Many of these focus on the setting of indivisible goods.

\section{Preliminaries}

An \textbf{instance} $\calI$ consists of the agents $N = \{1, 2, \ldots, n\}$, the (indivisible) goods $M = \{1, 2, \ldots, m\}$, and each agent's utility function $u_i$. We assume $n \ge 2$. The utility function is \textit{nonnegative}, i.e. $u_i(j) \ge 0$ for all $i,j$. It is \textit{additive}, i.e. $u_i(A) = \sum_{j \in A} u_i(j)$ for a set of goods $A$. It is also \textit{normalized}, i.e. $u_i(M) = 1$, so that each agent values the whole bundle identically.

An \textbf{allocation} $\calA$ for an instance is a partition $(A_1, \ldots, A_n)$ of the goods $M$ such that agent $i$ receives bundle $A_i$. A \textbf{property} $P$ is a Boolean predicate on the allocations; alternatively, it maps each instance $\calI$ to the set $P(\calI)$ of allocations satisfying the property. A property is \textbf{always satisfiable} if $|P(\calI)| \ge 1$ for all $\calI$.

The \textbf{egalitarian welfare} of an allocation $\calA$ of an instance $\calI$ is
$$\EW(\calI, \calA) := \min_{i \in N} u_i(A_i).$$

The \textbf{maximum egalitarian welfare (MEW)} (also \textbf{optimal welfare}) of an instance $\calI$ is the highest possible egalitarian welfare for that instance; it is denoted $\MEW(\calI)$. Its \textbf{optimal $P$ welfare} only considers allocations that satisfy property $P$; it is denoted $\MEW_P(\calI)$. An allocation achieving the MEW is also said to satisfy property MEW.

\begin{defn}[Price of fairness]
The \textbf{price of fairness (POF)} of a property $P$ for instance $\calI$ is

$$\POF_P(\calI) := \frac{\max_{\mathcal{A}} \EW(\mathcal{A})}{\max_{\mathcal{A} \in  P(I)} \EW(\mathcal{A)}} = \frac{\MEW(\calI)}{\MEW_P(\calI)}.$$

For price of fairness, we use the convention $0/0 = 1$ and $x/0 = \infty$ for $x > 0$.

The price of fairness of a property $P$ over a family of instances is the supremum of the price of fairness over those instances.
\end{defn}

Price of fairness is traditionally represented as a function in terms of the number of agents $n$. We follow this convention in this paper. In this case, for any fixed $n$, the price of fairness for that $n$ is the supremum over all instances with $n$ agents.

\subsection{Properties} \label{sec:prop}

The following section defines the various properties that allocations may satisfy. We will investigate the price of fairness of every one of them.

First, we define various fairness properties:

\begin{defn}[EF1]
An allocation $\calA$ is \textbf{envy-free up to one good (EF1)} if, for any pair of agents $i,j$, there exists $G \subseteq A_j$ with $|G| \le 1$ such that $u_i(A_i) \ge u_i(A_j \setminus G)$.
\end{defn}

\begin{defn}[Balanced]
An allocation $\calA$ is \textbf{balanced (Ba)} if, for any pair of agents $i,j$, we have $|A_i| - |A_j| \in \{-1, 0, 1\}$.
\end{defn}

\begin{defn}[RR]
The \textbf{round-robin algorithm} takes an instance $\calI$ and works as follows. First, it puts the agents in some order. Then, starting from the first agent and following the order, looping around whenever we reach the last agent, the algorithm assigns to an agent her most valuable good from those remaining. In case of a tie, the algorithm breaks ties arbitrarily.

An allocation is \textbf{round-robin (RR)} if it is produced by the round-robin algorithm, for some ordering of the agents and choices on tiebreaks.
\end{defn}

\begin{fact}\label{fact:rr}
A RR allocation is also EF1 and balanced~\cite{Caragiannis2019}. As a result, since RR is always satisfiable, EF1 and balancedness are also always satisfiable.
\end{fact}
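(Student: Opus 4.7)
The plan is to prove the two structural properties of any RR allocation by a direct pairing argument that exploits how goods are picked in order. Fix an ordering of the agents and the tiebreak choices used by the algorithm, and let $A_1, \ldots, A_n$ be the resulting bundles.

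First I would handle balancedness by a simple counting argument: after each complete round through the agent order, every agent has picked exactly one good, so after $k$ complete rounds each $|A_i|$ equals $k$. Since the algorithm stops when the goods run out, writing $m = qn + r$ with $0 \le r < n$, the first $r$ agents in the order each end up with $q+1$ goods and the remaining $n-r$ with $q$ goods. Hence $|A_i| - |A_j| \in \{-1, 0, 1\}$ for every pair $i, j$, which is exactly balancedness.

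Next I would prove EF1 by splitting on the relative position of agents $i$ and $j$ in the round-robin order. If $i$ is ordered before $j$, then within each round $i$ picks first, so I can pair $i$'s $k$-th pick with $j$'s $k$-th pick (if the latter exists). Because $i$ picks greedily from the remaining goods at a moment when $j$'s later pick is still available, $u_i$ of $i$'s $k$-th pick is at least $u_i$ of $j$'s $k$-th pick. Summing over $k$ gives $u_i(A_i) \ge u_i(A_j)$, which is envy-freeness (and hence EF1 with $G = \emptyset$). If instead $j$ is ordered before $i$, let $g$ be the first good $j$ takes. Now for each $k \ge 1$, I pair $i$'s $k$-th pick with $j$'s $(k+1)$-th pick: $i$'s $k$-th pick is made while $j$'s $(k+1)$-th pick is still available, so by the greedy rule $u_i$ of $i$'s $k$-th pick dominates $u_i$ of $j$'s $(k+1)$-th pick. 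Summing gives $u_i(A_i) \ge u_i(A_j \setminus \{g\})$, verifying EF1 with $G = \{g\}$.

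The only subtlety, and the one step that deserves care, is handling the ``leftover'' pick in the second case: one has to make sure the pairing is well defined when $|A_j| = |A_i| + 1$, which is precisely why removing $j$'s first good suffices and why EF1 (rather than full envy-freeness) is what one obtains. The final sentence of the fact is then immediate: the round-robin algorithm produces an allocation on every instance, so RR is always satisfiable, and since every RR allocation is both EF1 and balanced, those two properties are always satisfiable as well.
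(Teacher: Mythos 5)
Your proof is correct, but it takes a different route from the paper only in the sense that the paper offers no proof at all: Fact~\ref{fact:rr} is stated as a known result with a citation to Caragiannis et al., and the round-robin guarantees are treated as imported background. What you have written is the standard self-contained argument, and it is sound. The balancedness count via $m = qn + r$ is immediate, and your EF1 case split is the right one: when $i$ precedes $j$ you correctly get full envy-freeness of $i$ toward $j$ by pairing $k$-th picks, and when $j$ precedes $i$ you correctly shift the pairing by one, matching $i$'s $k$-th pick against $j$'s $(k{+}1)$-th pick, which is exactly why only $j$'s first good needs to be removed and why EF1 rather than envy-freeness is what survives. You also correctly note that the pairing is well defined because $|A_j| \le |A_i| + 1$ in that case. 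The one thing your write-up buys over the paper's citation is self-containedness; the one thing it costs is length for a result the authors deliberately outsource. If you wanted to tighten it further, you could observe that both cases are instances of a single statement --- at the moment of $i$'s $k$-th pick, all of $j$'s picks from that moment onward are still available, so $i$'s $k$-th pick $u_i$-dominates each of them --- but that is a stylistic compression, not a correction.
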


We also define and investigate the following efficiency notions:

\begin{defn}[MUW]
The \textbf{utilitarian welfare} of an allocation $A$ is the sum of utilities $\sum_i u_i(A_i)$. The \textbf{maximum utilitarian welfare (MUW)} is the maximum possible utilitarian welfare; an allocation achieving that is also called MUW.
\end{defn}

\begin{defn}[MNW]
The \textbf{Nash welfare} of an allocation $A$ is the product of utilities $\prod_i u_i(A_i)$. The \textbf{maximum Nash welfare (MNW)} is the maximum possible Nash welfare; an allocation achieving that is also called MNW.
\end{defn}

\section{Fairness Properties}

In this section, we consider the price of fairness for the properties EF1, balanced, and RR. As mentioned in Fact~\ref{fact:rr}, a RR allocation is also EF1 and balanced, so these three properties are related. The results in this section are summarized in Table~\ref{tab:summary-EF1-bal-RR}.

\begin{table*}[ht]
\caption{Prices of EF1, balanced and RR}
\label{tab:summary-EF1-bal-RR}
\centering
\begin{tabular}{|l|c|c|}
    \hline
    \multicolumn{1}{|c|}{\multirow{2}{*}{Property}} & \multicolumn{2}{|c|}{Price of fairness} \\
    \cline{2-3}
    & {Lower Bound} & {Upper Bound} \\
    \hline
    Envy-free up to one good (EF1) & $n-1$ & $2n-1$ \\
    \hline
    Balanced & $n$ & $n$ \\
    \hline
    Round-robin algorithm (RR) & $n$ & $2n-1$ \\
    \hline
\end{tabular}
\end{table*}

We first provide a lower bound for the three properties.

\begin{thm} \label{thm:EF1lower}
$\POF_\EFi \ge n-1$ and $\POF_\RR, \POF_\Ba \ge n$.
\end{thm}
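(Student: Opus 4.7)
The plan is to construct explicit instances witnessing each lower bound. Because $\EFi$ is strictly weaker than balancedness and round-robin, a single instance will not give all three bounds, so I will use one family for $\Ba$ and $\RR$ and a different family for $\EFi$.

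For $\POF_\Ba, \POF_\RR \ge n$, take $n$ agents and $m$ items with $m$ a large multiple of $n$; for $i < n$ let agent $i$ value only item $i$ (with $u_i(i) = 1$), and let agent $n$ value items $n, n+1, \ldots, m$ uniformly at $\tfrac{1}{m-n+1}$ each. The allocation $A_i = \{i\}$ for $i < n$ together with $A_n = \{n, \ldots, m\}$ gives every agent value $1$, so $\MEW = 1$. In any balanced allocation each $|A_i| = m/n$, and since agent $n$'s utility comes only from items in $\{n, \ldots, m\}$, her value is at most $\tfrac{m/n}{m-n+1} \to \tfrac{1}{n}$ as $m \to \infty$. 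The round-robin outcome has the same shape: agents $1, \ldots, n-1$ each grab their unique valued item in the first round, after which every pick of agent $n$ is from $\{n, \ldots, m\}$ (her other remaining options are all valued equally), totaling $m/n$ items. Thus $\MEW_\Ba, \MEW_\RR \to 1/n$ and the ratio tends to $n$.

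For $\POF_\EFi \ge n-1$ the previous instance is unhelpful, because its MEW allocation is already $\EFi$ (agents $i < n$ are never envied since others value their singleton bundle at $0$, and nobody envies agent $n$ since they value her items at $0$). I instead use a construction that forces $\EFi$ to spread items evenly. Take $n$ agents and $m$ items with utilities $u_1(1) = 1$ and $u_1(j) = 0$ for $j \ge 2$; $u_2(1) = 1 - \eps$ and $u_2(j) = \tfrac{\eps}{m-1}$ for $j \ge 2$; and $u_i(j) = \tfrac{1}{m}$ for all $j$ when $i \ge 3$. For MEW, agent $1$ must receive item $1$ (otherwise her value is $0$), and the remaining $m-1$ items should be distributed so as to balance $\min(\eps|A_2|/(m-1), |A_3|/m, \ldots, |A_n|/m)$; a direct calculation gives $\MEW \to \tfrac{\eps}{1 + \eps(n-2)}$ as $m \to \infty$. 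For $\MEW_\EFi$, the uniform valuations of agents $3, \ldots, n$ together with agent $2$'s almost-uniform valuation on $\{2, \ldots, m\}$ pin all of $|A_2|, \ldots, |A_n|$ to differ pairwise by at most $1$, so each is approximately $\tfrac{m-1}{n-1}$, and agent $2$'s value $\approx \eps/(n-1)$ becomes the bottleneck. The ratio is $(n-1)/(1 + \eps(n-2))$, which tends to $n-1$ as $\eps \to 0^+$ (after $m \to \infty$).

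The delicate step is the $\EFi$ argument. One must (i) verify the full set of pairwise $\EFi$ conditions among agents $2, \ldots, n$ to see that they genuinely collapse to size-balance on that group, and (ii) rule out cleverer $\EFi$ allocations that might, for instance, hand agent $1$ extra worthless items to shift the constraints elsewhere. A monotonicity observation (items beyond item $1$ contribute $0$ to agent $1$'s value, so giving her more only shrinks the pool for the others) handles (ii), and a careful pairwise check against agent $1$'s bundle $\{1\}$ shows that agent $1$'s presence does not relax the balance constraint binding the remaining $n-1$ agents. The iterated limit ($m \to \infty$ first, then $\eps \to 0^+$) is mandatory because both $\MEW$ and $\MEW_\EFi$ vanish as $\eps \to 0$; only the ratio converges, and the supremum over instances is exactly $n-1$.
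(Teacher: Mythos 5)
Your proposal is correct, but it is organized differently from the paper's proof, which uses a \emph{single} family of instances for all three bounds: agent $1$ wants only good $1$; agents $2,\dots,n-1$ value goods $2,\dots,m$ at $\eps$ each; agent $n$ values them at $\eps^2$ each. There the optimum hands agent $n$ almost all of goods $2,\dots,m$, while EF1 forces agents $2,\dots,n$ to split them evenly (ratio $\to n-1$) and balancedness forces all $n$ agents to split everything evenly (ratio $\to n$), with $\MEW_\RR \le \MEW_\Ba$ handling round-robin. Your two constructions each check out: in the balanced/RR instance the optimum is exactly $1$ and any balanced allocation caps agent $n$ at $(m/n)/(m-n+1) \to 1/n$ (and for RR you only need that RR allocations are balanced, which is cleaner than your narrative about what round-robin ``does''); in the EF1 instance the pairwise EF1 checks do pin $|A_2|,\dots,|A_n|$ to within $1$ of each other and cap $|A_1|$, so agent $2$'s utility is at most roughly $\eps/(n-1)$ against an optimum of $\eps/(1+\eps(n-2))$, giving $n-1$ in the iterated limit. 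Your EF1 instance is structurally very close to the paper's single instance, while your balanced instance is arguably cleaner (no $\eps$, exact $\MEW=1$, a single limit). One remark: your stated reason for needing two instances --- that ``EF1 is strictly weaker than balancedness'' --- is not right (neither property implies the other, and the paper's one instance does serve all three bounds); the genuine reason is the one you also give, namely that \emph{your} first instance's optimal allocation happens to already be EF1. This mis-statement does not affect the validity of the constructions.
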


\begin{proof}
Let $m \gg n$ and $\eps \ll 1/m$. Consider the instance $\calI$ with following utilities:

\begin{itemize}
    \item $u_1(1) = 1$ and $u_1(j) = 0$ for $2 \le j \le m$.
    \item For $i = 2, \ldots, n-1$: $u_i(1) = 1 - (m-1) \eps$ and $u_i(j) = \eps$ for $2 \le j \le m$.
    \item $u_n(1) = 1 - (m-1) \eps^2$ and $u_n(j) = \eps^2$ for $2 \le j \le m$.
\end{itemize}

In any allocation with nonzero egalitarian welfare, agent 1 gets good 1, and each other agent gets at least one good. Once this is done, the minimum welfare is dictated by agent $n$. So, the optimal welfare is obtained by giving good $i$ to agent $i$ for $i = 1, \ldots, n-1$, and the remaining goods to agent $n$. This gives
$$\MEW = (m - (n-1)) \cdot \eps^2.$$

To obtain an EF1 allocation with nonzero welfare, agents $2, \ldots, n$ must split goods $2, \ldots, m$ as evenly as possible, giving
$$\MEW_\EFi = \left\lceil \frac{m-1}{n-1} \right\rceil \cdot \eps^2.$$

To obtain a balanced allocation with nonzero welfare, all agents must split the goods as evenly as possible, giving
$$\MEW_\Ba = \left\lceil \frac{m}{n} \right\rceil \cdot \eps^2.$$

Therefore, as $m \to \infty$,
$$\POF_\EFi(\calI) = \frac{\MEW}{\MEW_\EFi} \to n-1 \qquad \text{and} \qquad \POF_\Ba(\calI) = \frac{\MEW}{\MEW_\Ba} \to n.$$

This gives the lower bounds for EF1 and balancedness. For RR, note that any RR allocation is balanced, so $\MEW_\RR \le \MEW_\Ba$ and so $\POF_\RR \ge \POF_\Ba$.
\qed
\end{proof}

We now provide an upper bound proof for balancedness.

\begin{thm} \label{thm:BAupper}
$\POF_\Ba \le n$.
\end{thm}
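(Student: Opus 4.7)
The plan is to transform any optimal (MEW) allocation $\calA^*$ into a balanced allocation $\calB$ whose egalitarian welfare is at least $\MEW(\calI)/n$, thereby yielding $\MEW_\Ba(\calI) \ge \MEW(\calI)/n$ and hence $\POF_\Ba \le n$. Write $w^* = \MEW(\calI)$, $s_i = |A_i^*|$, $k = \lfloor m/n \rfloor$, and $k' = \lceil m/n \rceil$; a balanced allocation assigns each agent a bundle of size $k$ or $k'$, and has exactly $r := m - nk$ bundles of size $k'$. One may assume $w^* > 0$, since otherwise $\MEW_\Ba \le \MEW = 0$ and the convention $0/0 = 1$ gives $\POF_\Ba = 1$.

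A preliminary observation that rules out the only potential obstruction is that $w^* > 0$ forces $s_i \le nk$ for every agent: an agent with $s_i \ge nk + 1$ would leave at most $m - (nk + 1) = r - 1 \le n - 2$ goods to be shared among the other $n - 1$ agents, so some other agent would receive an empty bundle, contradicting $w^* > 0$. Consequently, one may pick any target sizes $t_1, \ldots, t_n \in \{k, k'\}$ summing to $m$ (i.e., with exactly $r$ of them equal to $k'$), and the inequality $t_i \ge s_i/n$ then holds for every $i$ automatically, because $s_i \le nk \le nt_i$.

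The construction of $\calB$ is then as follows: every agent $i$ with $s_i \ge t_i$ keeps the top $t_i$ goods of $A_i^*$ ranked by $u_i$ and returns the remaining $s_i - t_i$ goods to a shared pool; every agent with $s_i < t_i$ keeps all of $A_i^*$ and is subsequently assigned $t_i - s_i$ arbitrary goods from the pool. Supply and demand at the pool match because $\sum_i s_i = \sum_i t_i = m$, so this is a valid partition of $M$. For the welfare bound, each shrinking agent's retained value is at least $(t_i/s_i)\cdot u_i(A_i^*) \ge (t_i/s_i) w^* \ge w^*/n$, using the fact that the top-$t_i$ average in $A_i^*$ is at least its overall average combined with $t_i/s_i \ge 1/n$; each growing agent keeps all of $A_i^*$ (value $\ge w^*$) plus nonnegative contributions from the pool, so has value $\ge w^*/n$ as well. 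Hence $\EW(\calI, \calB) \ge w^*/n$.

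The main step I expect to be slightly delicate is the joint size-and-value accounting: the three constraints that each $t_i \in \{k, k'\}$, that the $t_i$ sum to $m$, and that $t_i \ge s_i/n$ could in principle conflict when some $s_i$ is very large. The observation that $w^* > 0$ already forces every $s_i \le nk$ is what makes the constraint system trivially feasible, and is the real content of the proof; everything else is the top-average-at-least-overall-average inequality together with routine bookkeeping.
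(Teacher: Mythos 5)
Your proposal is correct and follows essentially the same approach as the paper's proof: truncate each bundle of a MEW allocation to balanced size by keeping the top goods, pool and redistribute the leftovers, and bound the loss via the top-$t$-average-versus-overall-average inequality. The only difference is bookkeeping --- the paper splits into three cases and bounds the bumped agents' bundle sizes by $(n-r)(q-1)$, whereas you prove the uniform bound $s_i \le nk$ from $w^* > 0$ up front, which collapses the case analysis; both yield the same $1/n$ factor.
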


\begin{proof}
Let the instance have $n$ agents and $m$ goods. Let $\lceil m/n \rceil = q$, and let the remainder of $m$ divided by $n$ be $r$; if $m$ is divisible by $n$, then $r = n$. Note that $m = n q - (n-r) \le n q$.

Given an instance, consider a MEW allocation $\calA$. For each agent $i$, let her keep the most valuable $q$ goods from her bundle $A_i$; if $|A_i| \le q$, then agent $i$ will get exactly $A_i$. However, if there are more than $r$ agents keeping $q$ goods, then only $r$ agents can keep $q$ goods; the rest can only keep $q-1$ goods. Pool the leftover goods and divide them arbitrarily such that the resulting allocation $\calB$ is balanced; the above guarantees such a balanced allocation exists.

For each agent $i$, there are three cases:

\begin{itemize}
    \item She had $\le q-1$ goods in $\calA$. Then she keeps all and so $u_i(B_i) \ge u_i(A_i)$.
    \item She has $q$ goods in $\calB$. Then she keeps the most valuable $q$ goods out of her initial bundle of $\le m$ goods, so $u_i(B_i) \ge \frac{q}{m} \cdot u_i(A_i) \ge \frac{1}{n} \cdot u_i(A_i)$.
    \item She has $q-1$ goods in $\calB$ and had $\ge q$ goods in $\calA$. Therefore, $r$ other agents have $q$ goods each; note that $r < n$. Then, agent $i$'s initial bundle had $\le m - r q = (n-r)(q-1)$ goods. Therefore, $u_i(B_i) \ge \frac{q-1}{(n-r)(q-1)} \cdot u_i(A_i) = \frac{1}{n-r} \cdot u_i(A_i)$.
\end{itemize}

In all cases, agent $i$'s utility in $\calB$ is at least $1/n$ of that in $\calA$. Therefore,
\begin{equation*}
    \POF_\Ba \le \frac{\EW(\calI, \calA)}{\EW(\calI, \calB)} \le n.
  \tag*{\qed}
\end{equation*}
\end{proof}

Combining Theorems~\ref{thm:EF1lower} and \ref{thm:BAupper}, we get the exact price for balancedness:

\begin{corl}
    $\POF_\Ba = n$.
\end{corl}

Next, we provide an upper bound for EF1 and RR.

\begin{thm} \label{thm:RRupper}
$\POF_\RR \le 2n-1$, and so, $\POF_\EFi \le 2n-1$.
\end{thm}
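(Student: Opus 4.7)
The plan is to fix a MEW allocation $\calA$ with welfare $v = \MEW(\calI)$, consider any RR allocation $\calB$ resulting from some ordering of agents, and show that $u_i(B_i) \ge v/(2n-1)$ for every agent $i$. The argument for a fixed agent $i$ at position $p_i$ in the RR order combines two structural properties of RR: the greedy rule (each pick is the picker's favorite available item) and EF1 (with the envy-witness for a preceding agent $j$ being $j$'s first pick $b_j^1$).

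First I would exploit the greedy rule. At agent $i$'s $k$-th turn, at most $(k-1)n + (p_i - 1)$ items have been taken in total, so among the top $(k-1)n + p_i$ items of $A_i$ sorted in decreasing $u_i$-value (call this sorted sequence $v^A_1 \ge v^A_2 \ge \cdots$), at least one must still be available. Hence $u_i(b_i^k) \ge v^A_{(k-1)n + p_i}$. Summing over $k$ and applying the standard block-averaging inequality $n \cdot v^A_t \ge v^A_t + v^A_{t+1} + \cdots + v^A_{t+n-1}$ for a decreasing sequence yields
$$ n \cdot u_i(B_i) \;\ge\; u_i(A_i) - T_i, $$
where $T_i := \sum_{j=1}^{p_i - 1} v^A_j$ is the value (to $i$) of the top $p_i - 1$ items of $A_i$, which ``fall off'' the block-sum.

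Second, I would combine EF1 with a decomposition of $A_i$ by destination in $\calB$. For items in $A_i \cap B_j$ with $j \succeq i$ in the RR order, no-envy gives $u_i(A_i \cap B_j) \le u_i(B_j) \le u_i(B_i)$; for $j \prec i$, EF1 with witness $b_j^1$ gives $u_i(A_i \cap B_j) \le u_i(B_j) \le u_i(B_i) + u_i(b_j^1)$. Including $A_i \cap B_i$ trivially and summing yields
$$ u_i(A_i) \;\le\; n \cdot u_i(B_i) + \sum_{j \prec i} u_i(b_j^1). $$

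The hardest part will be combining the two bounds to eliminate both leakage terms $T_i$ and $\sum_{j \prec i} u_i(b_j^1)$ cleanly. My plan is to exploit the observation that the only items capable of making either leakage term large are the $p_i - 1 \le n-1$ items picked in round~1 before agent $i$'s first turn, so the two leakages are essentially controllable by each other. An appropriate weighted combination should then collapse the leakages and yield $u_i(A_i) \le n \cdot u_i(B_i) + (n-1) \cdot u_i(B_i) = (2n-1) \cdot u_i(B_i)$, whence $u_i(B_i) \ge u_i(A_i)/(2n-1) \ge v/(2n-1)$ since $u_i(A_i) \ge v$. Taking the minimum over $i$ gives $\EW(\calI, \calB) \ge v/(2n-1)$, so $\POF_\RR \le 2n - 1$; the bound $\POF_\EFi \le 2n - 1$ then follows from Fact~\ref{fact:rr}, since any RR allocation is EF1 and so $\MEW_\EFi \ge \MEW_\RR$.
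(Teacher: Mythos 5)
Your two structural bounds are both correct, but the plan has a fatal gap at exactly the step you flag as ``the hardest part'': the two leakage terms cannot be combined away for an arbitrary round-robin ordering, because the statement you are trying to prove agent-by-agent is simply false for a badly chosen ordering. Since $\MEW_\RR$ is a maximum over RR allocations, you must \emph{exhibit} a good ordering; your argument never chooses one, and no ordering-free combination of your two inequalities can work. Concretely, take $n=2$, $m=2$, $u_1(1)=1$, $u_1(2)=0$, $u_2(1)=0.6$, $u_2(2)=0.4$. The MEW allocation $\calA$ gives good $1$ to agent $1$ and good $2$ to agent $2$. Run round-robin with agent $2$ first: agent $2$ takes good $1$, agent $1$ is left with $u_1(B_1)=0$. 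Here $p_1=2$, $T_1 = u_1(1) = 1$ and $\sum_{j\prec 1}u_1(b_j^1) = u_1(1) = 1$, so both of your inequalities hold with equality ($n\cdot 0 \ge 1-1$ and $1 \le n\cdot 0 + 1$), yet $u_1(A_1) \le (2n-1)\,u_1(B_1)$ fails badly. So the leakages are \emph{not} controllable by each other: the items picked before agent $i$'s first turn can simultaneously be agent $i$'s top items of $A_i$ and be worth nothing to her in $\calB$.

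The missing idea, and the bulk of the paper's proof, is the construction of the ordering. The paper forms a reduced instance containing only each agent's favourite good $g_i$ from her MEW bundle, takes a Pareto-optimal balanced allocation weakly dominating the assignment $g_i \mapsto i$, shows its envy-graph is acyclic (Lemma~\ref{lem:POacyclic}), and runs round-robin in the reverse of a topological order with suitable tie-breaking (Lemma~\ref{lem:RRPO}). This guarantees the key inequality $u_i(h_1) \ge u_i(g_i) = v^A_1$, i.e.\ each agent's \emph{first pick} is at least as valuable to her as her top item in $\calA$. With that in hand, the top item's contribution is charged to $h_1$ with weight $2n-1$ and the tail is handled by exactly your block-averaging argument ($u_i(h_k) \ge v^A_{kn}$ for $k \ge 2$), giving $u_i(A_i) \le (2n-1)u_i(B_i)$. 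Your bound~1 and the paper's tail estimate coincide; what you lack is the substitute for $u_i(h_1) \ge v^A_1$, and that cannot be recovered from EF1 alone. Your closing reduction from $\POF_\EFi$ to $\POF_\RR$ via Fact~\ref{fact:rr} is fine.
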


Before going to the proof, we establish some definitions.

\begin{defn}[Domination and Pareto-optimality]
An allocation $\calA$ is \textbf{weakly dominated} by an allocation $\calB$ if $u_i(A_i) \le u_i(B_i)$ for all agent $i$. It is \textbf{strongly dominated} by $\calB$ if, in addition, at least one of the inequalities is strict.

An allocation is \textbf{Pareto-optimal (PO)} if it is not strongly dominated by any allocation.
\end{defn}

\begin{defn}[Envy-Graph]
Given allocation $\calA$, its \textbf{envy-graph} $\EG_\calA$ is a directed graph defined as follows. The vertex set is the set of agents $N = \{1, 2, \ldots, n\}$. There is an edge $i \to j$ whenever $u_i(A_i) < u_i(A_j)$.
\end{defn}

We first prove the following property of the envy-graph of Pareto-optimal allocations, which will be useful in our proof of Theorem~\ref{thm:RRupper}.

\begin{lem} \label{lem:POacyclic}
The envy-graph $\EG_\calA$ of any Pareto-optimal allocation $\calA$ is acyclic.
\end{lem}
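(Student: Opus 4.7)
The plan is to argue by contradiction: assume $\EG_\calA$ contains a directed cycle and construct an allocation that strongly dominates $\calA$, contradicting Pareto-optimality. Concretely, suppose there is a cycle $i_1 \to i_2 \to \cdots \to i_k \to i_1$ in $\EG_\calA$, so by definition of the envy-graph we have
\[
u_{i_j}(A_{i_j}) < u_{i_j}(A_{i_{j+1}})
\]
for each $j = 1, \ldots, k$ (indices taken modulo $k$).

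The key construction is a cyclic shift of bundles along this cycle. I define a new allocation $\calB$ by setting $B_{i_j} := A_{i_{j+1}}$ for $j = 1, \ldots, k$, and $B_\ell := A_\ell$ for every agent $\ell$ outside the cycle. This is still a valid partition of $M$ because we are simply permuting the bundles $A_{i_1}, \ldots, A_{i_k}$ among the same set of agents.

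Checking that $\calB$ strongly dominates $\calA$ is then immediate: for every agent $\ell \notin \{i_1, \ldots, i_k\}$ we have $u_\ell(B_\ell) = u_\ell(A_\ell)$, while for each agent $i_j$ in the cycle the inequality above gives $u_{i_j}(B_{i_j}) = u_{i_j}(A_{i_{j+1}}) > u_{i_j}(A_{i_j})$. Thus no agent is worse off and at least one (in fact every agent in the cycle) is strictly better off, so $\calB$ strongly dominates $\calA$, contradicting Pareto-optimality.

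There is no real obstacle here; the argument is essentially one paragraph long. The only point worth being careful about is that the shift is well-defined as a partition (which follows because the $A_{i_j}$ are pairwise disjoint and their union is preserved), and that strict inequality along the cycle is enough to conclude strong (not merely weak) domination.
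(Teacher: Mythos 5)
Your proof is correct and follows exactly the same approach as the paper's: assume a directed cycle in the envy-graph, cyclically shift the bundles along the cycle so each agent receives the bundle she envied, and observe that this strongly dominates $\calA$, contradicting Pareto-optimality. Your version just spells out the index bookkeeping a little more explicitly.
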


\begin{proof}
Suppose $\EG_\calA$ has a directed cycle $C$ of agents that envy the next in the cycle. Consider an allocation $\calA'$ that shifts each bundle in $C$ backward, so each agent receives the bundle she envied.

Note that agents not in $C$ retain their utilities, while agents in $C$ strictly improve their utilities. Therefore, $\calA'$ strongly dominates $\calA$. This contradicts that $\calA$ is Pareto-optimal. Therefore, such a cycle $C$ cannot exist, so $\EG_\calA$ is acyclic.
\qed
\end{proof}

Using the above lemma, we can now describe the ordering for the round-robin algorithm. In the simplified case with $m=n$, it yields a strong result.

\begin{lem} \label{lem:RRPO}
Consider an instance with $n$ agents and $m$ goods with $m = n$. Let $\calA$ be a balanced allocation (i.e. assigns one good to each agent). Then there exists an allocation $\calB$ that is produced by the round-robin algorithm for some ordering and tiebreaking mechanism, and that also weakly dominates $\calA$.
\end{lem}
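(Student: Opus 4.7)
The plan is to lift $\calA$ to a balanced allocation $\calA'$ that weakly dominates it and whose envy-graph is acyclic, and then recognise $\calA'$ itself as a round-robin output by choosing the agent ordering to be a topological order of that envy-graph.

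First, since there are only finitely many balanced allocations and at least one (namely $\calA$) weakly dominates $\calA$, I would pick a balanced allocation $\calA'$ weakly dominating $\calA$ that maximises the sum $\sum_i u_i(A'_i)$; in particular, $\calA'$ is not strongly dominated by any other balanced allocation. I would then adapt Lemma~\ref{lem:POacyclic} to show that $\EG_{\calA'}$ is acyclic: if there were a directed envy cycle $C$, cyclically shifting the singleton bundles along $C$ (so each agent in $C$ takes the one-good bundle she envied) produces another balanced allocation that strongly dominates $\calA'$, contradicting the choice of $\calA'$. Balancedness is preserved here because we merely permute singletons among the agents of $C$.

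Having established acyclicity, I would list the agents in an order $s_1, s_2, \ldots, s_n$ obtained by repeatedly extracting a sink of the current induced subgraph, so that each $s_k$ has no outgoing envy edge to any of $s_{k+1}, \ldots, s_n$. Equivalently, $u_{s_k}(A'_{s_k}) \ge u_{s_k}(A'_{s_j})$ for every $j > k$.

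Finally, I would run the round-robin algorithm with ordering $s_1, \ldots, s_n$ and prove by induction on $k$ that agent $s_k$ selects exactly the good $A'_{s_k}$: once $s_1, \ldots, s_{k-1}$ have taken $A'_{s_1}, \ldots, A'_{s_{k-1}}$, the remaining pool is $\{A'_{s_k}, A'_{s_{k+1}}, \ldots, A'_{s_n}\}$, and the sink property ensures $A'_{s_k}$ is among $s_k$'s favourites in this pool, so suitable tiebreaking awards $A'_{s_k}$ to $s_k$. Consequently the round-robin output is precisely $\calA'$, which weakly dominates $\calA$. The main obstacle is the acyclicity step — verifying that a cyclic rotation of singleton bundles produces a valid balanced allocation strongly dominating $\calA'$ — but this is minor bookkeeping, since each agent in the cycle simply trades her one good for the one-good bundle she envied, and agents outside the cycle are unaffected.
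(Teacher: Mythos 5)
Your proposal is correct and follows essentially the same route as the paper: select an undominated balanced allocation weakly dominating $\calA$ (the paper takes a Pareto-optimal element of that set, you take a sum-maximising one), use the envy-cycle-swap argument to get an acyclic envy-graph, order the agents by reverse topological order (your sink-extraction order is exactly this), and verify round-robin with suitable tiebreaking reproduces the allocation. The only cosmetic difference is that you confirm the output by direct induction on the picks, whereas the paper argues by contradiction via the earliest worse-off agent; both are sound.
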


\begin{proof}
Consider the set of balanced allocations that weakly dominate $\calA$. The set is non-empty (as $\calA$ is in it) and finite, so let $\calB$ be some Pareto-optimal allocation in this set. By Lemma~\ref{lem:POacyclic}, $\EG_\calB$ is acyclic. Therefore, the vertices admit a topological ordering $\pi$.

We now describe the settings for the round-robin algorithm. The ordering is $\pi^\text{rev}$, the reverse of the topological ordering we got. The tiebreaking mechanism is arbitrary except that each agent $i$ prefers the good $g_i$ assigned to her in $\calB$ compared to other goods of the same utility. We model this by increasing $u_i(g_i)$ slightly, such that agent $i$'s order of preference for the goods does not change. Since each agent only gets one good, the envy-graph $\EG_\calB$ is not affected.

We claim the round-robin algorithm, with the ordering and tiebreaking mechanism described above, produces $\calB$. Suppose it does not. Since $\calB$ is Pareto-optimal, there is an agent that is worse off; let $x$ be the earliest such agent in the round-robin ordering, and let $y$ be the agent that receives good $g_x$.

Agent $y$ must pick before $x$, otherwise $x$ would have been able to choose $g_x$ instead of her worse good. By choice of $x$, agent $y$ is not worse off, and so $u_y(g_x) \ge u_y(g_y)$. Equality cannot happen, since we adjusted the good utilities so $u_y(g_y)$ is not equal to anything else. So, $u_y(g_x) > u_y(g_y)$, and thus, $y$ envies $x$ in $\calB$. Then $y \to x$ is a directed edge in $\EG_\calB$, and so $y$ must appear before $x$ in $\pi$; this contradicts that $y$ picks before $x$ in the round-robin ordering $\pi^\text{rev}$.

Hence, no such agent $x$ exists, and the round-robin algorithm produces $\calB$.
\qed
\end{proof}

With the above lemmas, we are now ready to prove Theorem~\ref{thm:RRupper}.

\begin{proof}[Theorem~\ref{thm:RRupper}]
Consider an instance $\calI$; we may assume the optimal welfare is positive. Consider a MEW allocation $\calA = (A_1, A_2, \ldots, A_n)$. Since each agent has positive utility, their bundle is nonempty. For each agent $i$, let $g_i \in A_i$ be her most valuable good in her bundle.

Consider a reduced instance $\calI'$ that has the same $n$ agents, but only uses the goods $g_1, g_2, \ldots, g_n$. Let allocation $\calA'$ be the allocation that assigns good $g_i$ to agent $i$. Using Lemma~\ref{lem:RRPO}, there is an allocation $\calB'$ produced by a round-robin ordering $\pi$ that also weakly dominates $\calA'$. Let $g_i'$ be the good received by agent $i$ in $\calB'$; note that $u_i(g_i') \ge u_i(g_i)$ for all $i$.

We now use the same ordering $\pi$ to perform the round-robin algorithm over the initial instance $\calI$. The tiebreaking mechanism is the same: agent $i$ prefers good $g_i'$ if tied. Let $\calB = (B_1, B_2, \ldots, B_n)$ be the resulting allocation.

Let $x$ be an arbitrary agent. We will give a lower bound on $u_x(B_x) / u_x(A_x)$. Name the goods in bundle $A_x$ as $\{p_1, p_2, \ldots, p_\ell\}$ sorted in non-increasing utility; note that $p_1 = g_x$. Also, let agent $x$'s picks be $h_1, h_2, \ldots, h_k$ in order.

Consider agent $x$'s first pick $h_1$. All goods in $\calI'$ are present in $\calI$, so $u_x(h_1) \ge u_x(g_x')$ and so,
$$u_x(h_1) \ge u_x(g_x') \ge u_x(g_x) = u_x(p_1).$$

Consider agent $x$'s $k$-th pick $h_k$ for $k \ge 2$. At most $kn-1$ goods have been taken, so,
$$u_x(h_k) \ge u_x(p_{kn}).$$

Therefore,
\begin{align*}
u_x(A_x) = \sum_{i=1}^\ell u_x(p_i) &\le (2n-1) \cdot u_x(p_1) + n \cdot \sum_{k \ge 2} u_x(p_{kn}) \\
&\le (2n-1) \cdot u_x(h_1) + n \cdot \sum_{k \ge 2} u_x(h_k) \le (2n-1) \cdot u_x(B_x).
\end{align*}

In particular, let $x$ be an agent with minimum utility in $\calB$. Then, we have  $\EW(\calI, \calA) \leq u_x(A_x) \leq (2n-1) \cdot u_x(B_x) = (2n-1) \cdot \EW(\calI, \calB)$. Therefore,
\begin{equation*}
  \POF_\RR \le \frac{\EW(\calI, \calA)}{\EW(\calI, \calB)} \le 2n-1.
  \tag*{\qed}
\end{equation*}
\end{proof}

\section{Welfare Maximizers}

In this section, we consider the price of fairness for the properties MUW and MNW. We do not consider the price of fairness of MEW since it is 1 by definition. The results in this section are summarized in Table~\ref{tab:summary-MUW-MNW}.

\begin{table*}[ht]
\caption{Prices of MUW and MNW}
\label{tab:summary-MUW-MNW}
\centering
\begin{tabular}{|l|c|c|c|}
    \hline
    \multicolumn{2}{|c|}{\multirow{2}{*}{Property}} & \multicolumn{2}{|c|}{Price of fairness} \\
    \cline{3-4}
    \multicolumn{2}{|c|}{} & {Lower Bound} & {Upper Bound} \\
    \hline
    \multicolumn{2}{|l|}{Maximum utilitarian welfare (MUW)} & $\infty$ & $\infty$ \\
    \hline
    \multirow{2}{*}{Maximum Nash welfare (MNW)} & ($n = 2$) & 1.754$\ldots$ & 2 \\
    \cline{2-4}
    & ($n \ge 3$) & $\infty$ & $\infty$\\
    \hline
\end{tabular}
\end{table*}

We start with a result about MUW.

\begin{thm} \label{thm:MUW}
$\POF_\MUW = \infty$.
\end{thm}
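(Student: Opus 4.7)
My plan is to prove $\POF_\MUW = \infty$ by exhibiting, for every $n \geq 2$, a one-parameter family of instances on which $\POF_\MUW$ diverges. The driving intuition is that MUW is entirely greedy—each good must go to its unique maximum-valuer—so if I design utilities so that a particular agent is never the strict maximizer of anything substantial, then she is forced into a tiny bundle in every MUW allocation, even though a non-greedy allocation that sacrifices only $O(\eps)$ of utilitarian welfare would hand her a constant share.

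Concretely, for each $n \geq 2$ and each small $\eps > 0$, I would use $n$ agents and $n+1$ goods. Agent $1$ values goods $1$ and $2$ at $1/2$ each, with all other goods at $0$. Agent $2$ values goods $1$ and $2$ at $1/2 - \eps$ each, good $3$ at $2\eps$, and all other goods at $0$. For each $i \in \{3, \ldots, n\}$, agent $i$ values only good $i + 1$, at $1$. Each $u_i$ is nonnegative and sums to $1$. For sufficiently small $\eps$, every good has a unique strict maximum-valuer (agent $1$ for goods $1$ and $2$; agent $2$ for good $3$; agent $j$ for good $j + 1$ when $j \geq 3$), so there is a unique MUW allocation: $A_1 = \{1, 2\}$, $A_2 = \{3\}$, and $A_i = \{i+1\}$ for $i \geq 3$, with agent $2$'s utility equal to $2\eps$. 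Therefore $\MEW_\MUW = 2\eps$.

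Next I would lower-bound $\MEW$ via an alternative allocation: $B_1 = \{1\}$, $B_2 = \{2, 3\}$, and $B_i = \{i+1\}$ for $i \geq 3$. The corresponding utilities are $1/2$, $(1/2 - \eps) + 2\eps = 1/2 + \eps$, and $1$ for each $i \geq 3$, so $\EW(\calI, \calB) = 1/2$ and thus $\MEW \geq 1/2$. Combining, $\POF_\MUW \geq \frac{1/2}{2\eps} = \frac{1}{4\eps}$, which is unbounded as $\eps \to 0$, so $\POF_\MUW = \infty$. The only nontrivial step is certifying that the MUW allocation is genuinely the stated one (so that agent $2$ receives only good $3$ and her utility is exactly $2\eps$); this is built into the construction by ensuring every good has a strict maximizer, so I don't foresee any genuine obstacle.
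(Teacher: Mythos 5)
Your proposal is correct and is essentially the same construction as the paper's: the same two-agent core instance (goods $1,2$ valued $1/2$ by agent $1$ and $1/2-\eps$ by agent $2$, good $3$ valued $2\eps$ by agent $2$ only), the same MUW versus MEW comparison yielding a ratio of $1/(4\eps)$, and the same padding with dummy agents who each exclusively desire one extra good to cover general $n$. No substantive difference from the paper's proof.
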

\begin{proof}
Let $\eps \ll 1$. Take the instance with $n = 2$, $m = 3$ and the utilities below:

\begin{itemize}
    \item $u_1(1) = u_1(2) = 1/2$ and $u_1(3) = 0$.
    \item $u_2(1) = u_2(2) = 1/2 - \eps$ and $u_2(3) = 2 \eps$.
\end{itemize}

A MEW allocation is to assign good 1 to agent 1, and goods 2 and 3 to agent 2. The utilitarian welfare is $1 + \eps$, and the egalitarian welfare is $1/2$.

However, the MUW allocation is to assign goods 1 and 2 to agent 1, and good 3 to agent 2. The utilitarian welfare is $1 + 2 \eps$, and the egalitarian welfare is $2 \eps$. Thus the price of fairness is $(1/2) / (2 \eps) = 1/(4 \eps)$, which goes to $\infty$ as $\eps \to 0$.

To add more agents, we simply introduce $k$ new agents and $k$ new goods so that each new agent exclusively desires one of the new goods (with utility 1) without any overlap. Then the MEW and MUW are still obtained by assigning the new agents to the new goods, leaving only the original instance.
\qed
\end{proof}

For MNW, the behaviors of instances with $n = 2$ and instances with $n \ge 3$ differ substantially. We first provide a lower bound for the case of 2 agents.

\begin{thm} \label{thm:MNW2lower}
Consider instances with $n = 2$ agents. Let $\lambda = 1.324\ldots$ be the real number satisfying $\lambda^3 - \lambda - 1 = 0$. Then,

$$\POF_\MNW \ge \lambda^2 = 1.754\ldots$$
\end{thm}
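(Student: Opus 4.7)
The plan is to exhibit an explicit two-agent, three-good instance whose egalitarian price of fairness for MNW approaches $\lambda^2$. The construction is driven by the defining identity $\lambda^3 = \lambda + 1$, which yields the chain $1/\lambda^2 = 1 + \lambda - \lambda^2$, $1/\lambda^3 = \lambda^2 - \lambda$, $1/\lambda^4 = \lambda - 1$, and $1/\lambda^5 = 2 - \lambda^2$, all lying in $(0,1)$ and so usable as components of normalized utility vectors.

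Concretely, I would take agent~1's utilities to be $(\lambda - 1 + \eps,\ 2 - \lambda^2,\ \lambda^2 - \lambda - \eps)$ and agent~2's to be $(0,\ \lambda^2 - \lambda,\ 1 + \lambda - \lambda^2)$ for a small $\eps > 0$. Both rows are nonnegative and sum to $1$. I would then enumerate the six nontrivial partitions of $\{1,2,3\}$ and tabulate, for each, both $u_1(A_1) u_2(A_2)$ and $\min(u_1(A_1), u_2(A_2))$. Using the identities above, at $\eps = 0$ the three allocations with $A_1 \in \{\{1\}, \{1,2\}, \{1,3\}\}$ all tie at Nash product $1/\lambda^4$, while their egalitarian welfares are $\lambda - 1$, $1/\lambda^2$, and $1/\lambda^3$ respectively; the remaining three allocations give strictly smaller Nash products ($1/\lambda^7$, $1/\lambda^6$, and $0$). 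So $\MEW(\calI) = 1/\lambda^2$ is attained at $(\{1,2\}, \{3\})$, but at $\eps = 0$ that allocation is also Nash-optimal, leaving the ratio equal to $1$.

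The role of $\eps$ is precisely to break the three-way Nash tie in the correct direction. Increasing $u_1(1)$ by $\eps$ raises the Nash product of $A_1 = \{1\}$ by $\eps \cdot u_2(\{2,3\}) = \eps$, raises that of $A_1 = \{1, 2\}$ by only $\eps \cdot u_2(\{3\}) = \eps/\lambda^2$, and leaves that of $A_1 = \{1,3\}$ unchanged, since the $+\eps$ at good $1$ and the $-\eps$ at good $3$ cancel inside the same bundle. Hence $A_1 = \{1\}$ is the unique MNW allocation for every $\eps > 0$, with $\MEW_\MNW = \min(\lambda - 1 + \eps, 1) = \lambda - 1 + \eps$, while the MEW allocation remains $(\{1,2\}, \{3\})$ with value $1/\lambda^2$. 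The ratio is $(1/\lambda^2)/(\lambda - 1 + \eps)$, and as $\eps \to 0^+$ this tends to $1/(\lambda^2(\lambda-1)) = 1/(\lambda^3 - \lambda^2) = 1/(1/\lambda^2) = \lambda^2$, giving the claimed lower bound.

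The main obstacle is guessing the right instance: the powers $1/\lambda^2, 1/\lambda^3, 1/\lambda^4$ must be calibrated so that the three natural MNW candidates collide at Nash product $1/\lambda^4$ while their min-utilities spread out by the factor $\lambda^2$. Once the instance is written down the verification reduces to a routine six-allocation case-check, with $\lambda^3 = \lambda + 1$ doing essentially all the algebraic work. A subsidiary subtlety is choosing the \emph{direction} of the perturbation: perturbing the opposite way would steer the MNW onto the balanced allocation $(\{1,2\}, \{3\})$ and collapse the ratio back to $1$.
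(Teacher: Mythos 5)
Your proof is correct and takes essentially the same approach as the paper: an explicit two-agent, three-good lower-bound instance in which one agent values one good at zero and the MNW allocation concentrates utility on one agent. In fact, after swapping the agents and reversing the order of the goods, your $\eps=0$ instance $\bigl(u_1=(1/\lambda^4,1/\lambda^5,1/\lambda^3),\ u_2=(0,1/\lambda^3,1/\lambda^2)\bigr)$ is exactly the limit point of the paper's two-parameter family as $x\to\lambda^2$ and $y\to 1/x^2$; your tie-breaking perturbation plays the same role as the paper's open window $\frac{1}{x+\sqrt{x}}<y<\frac{1}{x^2}$.
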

\begin{proof}
Let $x, y$ be positive real numbers satisfying $x > 1$ and

\begin{equation} \label{eq:MNW2lower}
\frac{1}{x + \sqrt{x}} < y < \frac{1}{x^2}.
\end{equation}

Consider an instance $\calI$ with $n = 2$ and $m = 3$, with the following utilities:

\begin{itemize}
    \item $u_1(1) = xy$, and $u_1(2) = 1 - xy$, and $u_1(3) = 0$.
    \item $u_2(1) = 1 - xy$, and $u_2(2) = (x-1)y$, and $u_2(3) = y$.
\end{itemize}

There are three plausible allocations, presented in Table~\ref{tab:MNW2welfares} along with the Nash and egalitarian welfares. The allocations are labeled by which agent gets a good, e.g. 1-1-2 means good 1 goes to agent 1, good 2 also goes to agent 1, and good 3 goes to agent 2.

\begin{table}[ht]
\caption{Nash and egalitarian welfare of the instance}
\label{tab:MNW2welfares}
\centering
\begin{tabular}{|c|c|c|}
\hline
Allocation & Nash & Egalitarian \\
\hline
1-1-2 & $y$ & $y$ \\
\hline
1-2-2 & $x^2 y^2$ & $xy$ \\
\hline
2-1-2 & $(1 - xy) (1 - (x-1)y)$ & $1 - xy$ \\
\hline
\end{tabular}
\end{table}

Given that \eqref{eq:MNW2lower} holds, it can be verified that the MEW allocation is 1-2-2 but the MNW allocation is 1-1-2. Therefore, $\POF_\MNW(\calI) = xy / y = x$.

However, $y$ in \eqref{eq:MNW2lower} can exist only if the gap is non-empty, i.e.
$$\frac{1}{x + \sqrt{x}} < \frac{1}{x^2} \quad \Longleftrightarrow \quad (\sqrt{x})^3 < (\sqrt{x}) + 1.$$

Thus we require $\sqrt{x} < \lambda$, where $\lambda$ is the real number satisfying $\lambda^3 - \lambda - 1 = 0$; moreover, any such $x$ admits some $y$. The price of fairness is
\begin{equation*}
  \POF_\MNW \ge \sup x = \lambda^2 \approx 1.754\ldots.
  \tag*{\qed}
\end{equation*}

\end{proof}

We complement Theorem 5 by presenting an upper bound for MNW when $n = 2$.
\begin{thm} \label{thm:MNW2upper}
For $n = 2$ agents, $\POF_\MNW \le 2$.
\end{thm}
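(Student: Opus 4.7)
My plan is to assume for contradiction that some MNW allocation $\calB = (B_1, B_2)$ has egalitarian welfare less than $w/2$, where $w := \MEW(\calI)$; we may take $w > 0$ (else $\POF = 1$ by convention). WLOG I would take $b_1 := u_1(B_1) < w/2$, fix a MEW allocation $\calA = (A_1, A_2)$ with $a_i := u_i(A_i) \ge w$, and combine four properties of $\calB$ to derive a contradiction. The first step is to exploit the normalization $u_i(M) = 1$, which for $n = 2$ yields $u_1(B_2) = 1 - b_1$ and $u_2(B_1) = 1 - b_2$. Comparing $\calB$ to the swapped allocation $(B_2, B_1)$, whose Nash welfare is $(1-b_1)(1-b_2)$, the MNW property gives $b_1 b_2 \ge (1-b_1)(1-b_2)$, which simplifies to $b_1 + b_2 \ge 1$, so $b_2 \ge 1 - b_1 > 1 - w/2$. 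Comparing $\calB$ to $\calA$ also gives $b_1 b_2 \ge a_1 a_2 \ge w^2 > 0$, so ratios are safe.

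Next, I would invoke the Caragiannis et al. theorem that MNW allocations are EF1. Applied to agent $1$ against agent $2$, this produces $g^* \in B_2$ with $b_1 \ge u_1(B_2 \setminus \{g^*\}) = 1 - b_1 - u_1(g^*)$. Setting $\alpha := u_1(g^*)$, we obtain $\alpha \ge 1 - 2 b_1 > 1 - w$. The key observation is that $u_1(M \setminus \{g^*\}) = 1 - \alpha < w \le u_1(A_1)$, which forces $g^* \in A_1$. Consequently $g^* \notin A_2$, and combining $u_2(A_2) \le u_2(M \setminus \{g^*\}) = 1 - \beta$ with $u_2(A_2) \ge w$ yields $\beta := u_2(g^*) \le 1 - w$, strictly smaller than $\alpha$.

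Finally, I would apply the MNW local-optimality condition to the single-good transfer of $g^*$ from $B_2$ to $B_1$: the new Nash welfare $(b_1 + \alpha)(b_2 - \beta)$ cannot exceed $b_1 b_2$, which simplifies to $\alpha(b_2 - \beta) \le b_1 \beta$. Since $\beta < \alpha$, this forces $b_2 - \beta < b_1$, so $b_2 < b_1 + \beta \le b_1 + (1 - w)$; combined with $b_2 \ge 1 - b_1$ from the swap inequality, this gives $1 - b_1 < b_1 + 1 - w$, i.e., $b_1 > w/2$, contradicting the assumption.

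The most delicate step I anticipate is bridging from the EF1-produced good $g^*$ back to the MEW allocation: one must notice that its large value $\alpha > 1 - w$ to agent $1$, together with the normalization, pins it down to $A_1$, which in turn forces $\beta \le 1 - w$. Without that pairing of EF1 with the normalization identity, the single-good exchange inequality has nothing to bite against; once $\beta < \alpha$ is established, the algebra yielding $b_1 > w/2$ is short and mechanical.
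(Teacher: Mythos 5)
Your proof is correct, but it takes a genuinely different route from the paper's. The paper argues directly: writing $(E_1,E_2)$ for a MEW allocation with $u_1(E_1)\le u_2(E_2)$, a swap argument applied to the \emph{MEW} allocation shows $u_2(E_2)\ge 1/2$; then the Nash-optimality inequality $xy\ge u_1(E_1)\,u_2(E_2)$ (where $x\ge y$ are the two utilities in the MNW allocation) rearranges to $u_1(E_1)/y\le x/u_2(E_2)\le 1/(1/2)=2$, using only $x\le 1$. You instead run a contradiction on a hypothetical MNW allocation with egalitarian welfare below $w/2$: you apply the swap argument to the \emph{MNW} allocation (yielding $b_1+b_2\ge 1$), import the theorem of Caragiannis et al.\ that MNW allocations are $\EFi$ to extract a good $g^*\in B_2$ with $u_1(g^*)>1-w$, use normalization to pin $g^*$ inside $A_1$ and hence bound $u_2(g^*)\le 1-w$, and close with a one-good transfer inequality. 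I checked the delicate points and they hold: $b_1>0$ and $\alpha>0$ (needed for the divisions) follow from $b_1b_2\ge a_1a_2\ge w^2>0$ and $w\le 1$, and the degenerate case $B_2=\emptyset$ is excluded by positive Nash welfare. The trade-off: the paper's argument is shorter and fully self-contained, whereas yours leans on the MNW-implies-$\EFi$ result as a black box and needs extra bookkeeping to locate $g^*$; on the other hand, your localization step ($g^*$ must lie in $A_1$, forcing $u_2(g^*)\le 1-w$) is a nice structural observation, and both arguments in fact establish the stronger claim that \emph{every} MNW allocation, not just the best one, has egalitarian welfare at least half of $\MEW$.
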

\begin{proof}
Let $\calI$ be an instance with $n = 2$ agents. Let $\calA_N = (N_1, N_2)$ be a MNW allocation and $\calA_E = (E_1, E_2)$ be a MEW allocation. We assume $\calA_N$ has strictly larger Nash welfare than $\calA_E$, otherwise there is nothing to prove. Note that positive Nash welfare implies no agent receives 0 utility.

Without loss of generality, suppose $u_1(E_1) \le u_2(E_2)$. We claim that $u_2(E_2) \ge 1/2$. Suppose that is not the case, then each agent values her bundle strictly less than 1/2. Swapping the bundles gives each agent a bundle with value strictly more than 1/2 and so improves the egalitarian welfare.

Let $x$ be the larger of $u_1(N_1)$ and $u_2(N_2)$, and $y$ be the smaller. The egalitarian welfare of $N$ is then $y$. We claim that $\frac{u_1(E_1)}{y} < 2$. Indeed, note that
$$xy > u_1(E_1) u_2(E_2) \quad \implies \quad \frac{x}{u_2(E_2)} > \frac{u_1(E_1)}{y}.$$

Now, note that $x \le 1$, since the utility of the whole set of goods is 1. Meanwhile, $u_2(E_2) \ge 1/2$ as proven above. Therefore,
$$\POF_\MNW(\calI) = \frac{u_1(E_1)}{y} < \frac{x}{u_2(E_2)} \le \frac{1}{1/2} = 2.$$

This works for any $\calI$, therefore $\POF_\MNW = \sup \POF_\MNW(\calI) \le 2$.
\qed
\end{proof}

Finally, we provide the price of MNW for the case of $n = 3$.

\begin{thm} \label{thm:MNW3}
For $n \ge 3$ agents, $\POF_\MNW = \infty$.
\end{thm}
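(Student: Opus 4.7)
The plan is to exhibit a parametric family of 3-agent instances whose $\POF_\MNW$ is unbounded, then extend to $n \ge 4$ by the dummy-agent padding trick used in Theorem~\ref{thm:MUW}.

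For $n=3$ I would take $m=3$ and a small parameter $\delta > 0$, setting $\eps := \delta^{3/2}$. The utilities form a cyclic pattern in which each agent values only two of the three goods, one strongly and one weakly:
$$u_1(1) = 1-\delta,\ u_1(3) = \delta;\qquad u_2(2) = 1-\delta,\ u_2(1) = \delta;\qquad u_3(2) = 1-\eps,\ u_3(3) = \eps,$$
with all remaining valuations equal to zero. A brief enumeration of the six bundles-of-size-one allocations (together with the observation that any allocation of shape $(2,1,0)$ leaves some agent with zero utility) shows that only two allocations have positive Nash welfare: the \emph{natural} allocation $N$ assigning good $i$ to agent $i$, with utility vector $(1-\delta, 1-\delta, \eps)$, and the \emph{shifted} allocation $S$ giving agent 1 good 3, agent 2 good 1, agent 3 good 2, with utility vector $(\delta, \delta, 1-\eps)$.

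Next I would compare the two. Their Nash welfares are $(1-\delta)^2 \eps$ and $\delta^2(1-\eps)$; with $\eps = \delta^{3/2}$ the first exceeds the second by a factor that grows like $\delta^{-1/2}$, so $N$ is the unique MNW allocation for sufficiently small $\delta$. Their egalitarian welfares are $\eps$ and $\delta$ respectively, so $\MEW(\calI) \ge \delta$ while $\MEW_\MNW(\calI) = \eps$, giving $\POF_\MNW(\calI) \ge \delta/\eps = \delta^{-1/2}$, which tends to infinity as $\delta \to 0$. To extend to $n \ge 4$, I would append $n-3$ dummy agents and $n-3$ new goods so that each dummy agent exclusively values her own new good (with utility $1$); both the MEW and MNW allocations on the extended instance must assign each dummy agent her own good, so the POF matches that of the 3-agent subinstance and remains unbounded.

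The main obstacle is identifying the right parameter scaling: we need $\delta^2 \ll \eps \ll \delta$, with the first inequality forcing MNW to prefer the unbalanced $N$ and the second making the shifted $S$ strictly better for egalitarian welfare. Any such choice works ($\eps = \delta^{3/2}$ is the cleanest), but engineering a cyclic valuation structure in which there are exactly two positive-Nash-welfare allocations, and in which the MNW-preferred one has asymptotically smaller egalitarian welfare than the MEW-preferred one, is the conceptual heart of the argument.
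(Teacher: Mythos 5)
Your proposal is correct and follows essentially the same approach as the paper: an explicit parametric $3$-agent instance in which the unique MNW allocation has egalitarian welfare asymptotically smaller than the true MEW, followed by the same dummy-agent padding for $n \ge 4$. Your cyclic valuation structure differs in its specific numbers from the paper's instance, but the verification (enumerating the positive-Nash-welfare allocations and comparing) and the conclusion are the same.
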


\begin{proof}
Let $\eps \ll 1$. Take the instance with $n = m = 3$ and the following utilities:

\begin{itemize}
    \item $u_1(1) = 1$ and $u_1(2) = u_1(3) = 0$.
    \item $u_2(1) = 1/3 - \eps/2$, $u_2(2) = \eps/2$, and $u_2(3) = 2/3$.
    \item $u_3(1) = 1 - \eps/2 - \eps^2/2$, $u_3(2) = \eps^2/2$, and $u_3(3) = \eps/2$.
\end{itemize}

The MEW allocation is to assign good 1 to agent 1, good 2 to agent 2, and good 3 to agent 3. The Nash welfare is $\eps^2 / 4$, and the egalitarian welfare is $\eps/2$.

However, the MNW allocation is to assign good 1 to agent 1, good 2 to agent 3, and good 3 to agent 2. The Nash welfare is $\eps^2 / 3$, and the egalitarian welfare is $\eps^2/2$. Thus the price of fairness is $(\eps/2) / (\eps^2/2) = 1/\eps$, which goes to $\infty$ as $\eps \to 0$.

To add more agents, we simply introduce $k$ new agents and $k$ new goods so that each new agent exclusively desires one of the new goods (with utility 1) without any overlap. Then the MEW and MNW are still obtained by assigning the new agents to the new goods, leaving only the original instance.
\qed
\end{proof}

\section{Conclusion}

We extended the notion of price of fairness to a combination that has not been investigated yet in the literature: egalitarian welfare with indivisible goods. We found upper and lower bounds for the (egalitarian) price of fairness for several different fairness conditions: envy-free up to one good, round-robin, balanced, utilitarian welfare maximizing, and Nash welfare maximizing.

Similar to the results for utilitarian welfare found by Bei et al.~\cite{ijcai2019p0012} and Barman et al.~\cite{Barman2020}, our results establish the asymptotic growth exactly. In fact, for balancedness, we not only derive the asymptotic growth, but the exact growth down to the constant. We still have a multiplicative gap of 2 between the bounds for EF1 and RR.

For welfare maximizers, it turns out that in many of the cases, maximizing any other welfare can come at arbitrarily large cost for the egalitarian welfare. The exception is maximizing Nash welfare with 2 agents, for which we have a finite price of fairness. There is still an unresolved gap between the lower and upper bounds.

Besides tightening the bounds, other directions for future work are to extend the results to other properties not discussed in this paper, and to investigate other fair division settings, such as using chores instead of goods.

It is also possible to stay within the realm of fair division of indivisible goods, but with other kinds of welfare. Bei et al.~\cite{ijcai2019p0012} and later Barman et al.~\cite{Barman2020} have investigated the case of utilitarian welfare, and we have considered egalitarian welfare, so the obvious next step is to look at Nash welfare. More generally, it is also possible to use the generalized H\"older mean introduced by Arunachaleswaran et al.~\cite{arunachaleswaran2021fair} that interpolates between these three kinds of welfare.

Bei et al.~\cite{ijcai2019p0012} also introduced the concept of \textit{strong price of fairness} which represents efficiency loss in the worst fair allocation instead of in the best fair allocation. One possible direction would be to study the strong price of fairness with respect to egalitarian welfare or other welfare measures.

\subsubsection{Acknowledgements} The authors would like to thank their lecturer Warut Suksompong for his valuable contributions.

%
%
%
\bibliographystyle{splncs04}
\bibliography{arxiv}
\end{document}